\newcommand{\tr}[2][1]{Tr_{#1}^{#2}}
\newcommand{\GF}[1]{{\mathbb F}_{#1}}
\begin{document}
\title{Preimages of $p-$Linearized Polynomials over $\GF{p}$}
\author{Kwang Ho Kim\inst{1,2}\and Sihem Mesnager\inst{3} \and Jong Hyok Choe\inst{1} \and Dok Nam Lee\inst{1} }
\institute{ Institute of Mathematics, State Academy of Sciences,
Pyongyang, Democratic People's Republic of Korea\\
\email{khk.cryptech@gmail.com} \and PGItech Corp., Pyongyang, Democratic People's Republic of Korea\\ \and Department of Mathematics, University of Paris VIII, F-93526 Saint-Denis, Laboratory Geometry, Analysis and Applications,  LAGA, University Sorbonne Paris Nord, CNRS, UMR 7539,  F-93430, Villetaneuse, France, and Telecom
Paris,  91120 Palaiseau, France.\\
\email{smesnager@univ-paris8.fr}\\} \maketitle

\begin{abstract}
Linearized polynomials over finite fields have been intensively
studied over the last several decades. Interesting new applications
of linearized polynomials to coding theory and finite geometry have
been also highlighted in recent years.

Let $p$ be any prime. Recently, preimages of the $p-$linearized
polynomials $\sum_{i=0}^{\frac kl-1} X^{p^{li}}$ and
$\sum_{i=0}^{\frac kl-1} (-1)^i X^{p^{li}}$ were explicitly computed
over $\GF{p^n}$ for any $n$. This paper extends that study to
$p-$linearized polynomials over $\GF{p}$, i.e., polynomials of the
shape
$$L(X)=\sum_{i=0}^t \alpha_i X^{p^i}, \alpha_i\in\GF{p}.$$ Given a
$k$ such that $L(X)$ divides $X-X^{p^k}$, the preimages of $L(X)$
can be explicitly computed over $\GF{p^n}$ for any $n$.\\

\noindent\textbf{Keywords:} Linearized  polynomial $\cdot$ Order of a
polynomial $\cdot$

\noindent\textbf{Mathematics Subject Classification:} 11D04, 12E05, 12E12.
\end{abstract}

\section{Introduction}
Let $p$ be a prime. A polynomial $L(X)\in\GF{p}[X]$ of shape
\begin{equation}\label{p-poly}
  L(X) = \sum_{i=0}^t \alpha_i X^{p^i}, \alpha_i\in\GF{p}
\end{equation}
is called a $p-$linearized polynomial over $\GF{p}$ or simply a
$p-$polynomial. Let $n$ be a positive integer. An affine equation
over $\GF{p}$ is an equation of type
\begin{equation}\label{eq_1}
L(X)=a,
\end{equation}
where $L$ is a $p-$polynomial and $a\in \GF{p^n}$.

The goal of this study is to explicitly compute all the solutions in
$\GF{p^n}$ to Equation~\eqref{eq_1}. We can reduce our study to the
$p-$polynomials with $\alpha_0\neq 0$ because if
$\alpha_0=\alpha_1=\cdots =\alpha_{s-1}=0, \alpha_s\neq 0$ then
$L(X)=\left(\sum_{i=0}^{t-s} \alpha_{i+s} X^{p^i}\right)^{p^s}$ and
therefore we can instead consider the affine equation
$L'(X)=a^{\frac{1}{p^s}}$ the first order term of which has the
nonzero coefficient $\alpha_s$.

Affine equations arise in many different problems and contexts (e.g.
\cite{BSS1999,CarletBook,CarletBook1,C20,C2019109,KCM20,KM20,MCGUIRE201968,McGuireMueller,MKCL2019,MKCT20,MKJ20,PF19,WU201379,Z19}).
 However, to find explicit
solutions  is often challenging and it is the ultimate goal to
achieve.

Recently, preimages of the special $p-$linearized polynomials
$T_l^k(X)=\sum_{i=0}^{\frac kl-1} X^{p^{li}}$ and
$S_l^k(X)=\sum_{i=0}^{\frac kl-1} (-1)^i X^{p^{li}}$ were explicitly
computed over $\GF{p^n}$ for any $n$, for the specific case $p=2$ in
\cite{MKCL2019} and for any $p$ in \cite{MKCL2020}.

This paper extends that studies to any $p-$polynomials of the shape
\eqref{p-poly}. Given a $k$ such that $L(X)$ divides $X-X^{p^k}$,
the preimages of $L(X)$ can be explicitly computed over $\GF{p^n}$.
It is known that $T_l^k(X)$ and $S_l^k(X)$ divide $X-X^{p^{2k}}$.

Obviously, if $x_1$ and $x_2$ are two solutions in $\GF{p^n}$ to
Equation~\eqref{eq_1}, then their difference $x_1-x_2$ is a zero of
$L$ in $\GF{p^n}$, that is, their difference lies in the set
$\ker(L)\cap \GF{p^n}:=\{x\in\GF{p^n}\mid L(x) = 0\}$. Therefore,
determination of the $\GF{p^n}-$solutions to Equation~\eqref{eq_1}
can be divided into two problems: to determine $\ker(L)\cap
\GF{p^n}$ and to find an explicit solution $x_0$ in $\GF{p^n}$.

The paper is organized as follows. After introducing some prerequisites in Section \ref{sec2}, we
solve these two problems for $p-$polynomials $L(X)$ in Section \ref{Main-results}.  As a
by-product, we also characterize the elements $a$ in $\GF{p^n}$ for
which Equation~\eqref{eq_1} has at least one solution in $\GF{p^n}$.
In Section \ref{Examples}, we provide some pertinent examples which should explain
applicability of the obtained results. Section \ref{Conclusion} concludes the article.

\section{Some prerequisites}\label{sec2}
Given a finite set $F$, $|F|$ denotes its cardinality.
 For two polynomials $f,g\in\GF{p}[X]$, we denote
their greatest common divisor monic polynomial in $\GF{p}[X]$ as
$(f,g)$.

Define linearized polynomials:
\begin{eqnarray}
  \label{equation:T}  &&T_l^k(X) := \sum_{i=0}^{\frac kl-1} X^{p^{li}},\\
  \label{equation:S}  &&S_l^k(X) := \sum_{i=0}^{\frac kl-1} (-1)^i  X^{p^{li}}.
\end{eqnarray}
Notice $$S_k^{2k}(X)=X-X^{p^k},$$ and therefore, $x\in \GF{p^k}$ if
and only if $S_k^{2k}(x)=0$.

 The following properties of these
polynomials are heavily used throughout this paper.
\begin{lemma}[\cite{MKCL2019}]\label{lem_properties}
  For any positive integers $k$, $l$ and $m$ with $m\vert l\vert k$,
  the following are true.
\begin{enumerate}
\item \label{lem_properties:1}\(T_l^k\circ T_m^l(X)=T_m^k(X)\) is an identity. \(T_l^k\circ S_m^l(X)=S_m^k(X)\) if $l/m$ is even and
  \( S_l^k\circ S_m^l(X)=S_m^k(X)\) if $l/m$ is odd.
\item\label{lem_properties:2} \(S_l^k\circ T_l^{2l}(X)=S_k^{2k}(X)\) if
  $\frac{k}{l}$ is even and   \(S_l^k\circ T_l^{2l}(X)=T_k^{2k}(X)\) if $\frac{k}{l}$ is odd.
\item \label{lem_properties:3}  \(T_l^k\circ S_l^{2l}(X) =S_k^{2k}(X)\).
\item \label{lem_properties:4} \( T_k^{[n,k]}(x)=T_d^n(x)\) for
  any $x\in\GF{p^n}$. Furthermore, if
  $\frac{[n,k]}{k}$ is even, then \( S_k^{[n,k]}(x)=S_d^n(x)\) for
  any $x\in\GF{p^n}$.
\end{enumerate}
\end{lemma}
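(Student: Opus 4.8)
The plan is to reduce the compositional identities of Parts~\ref{lem_properties:1}--\ref{lem_properties:3} to ordinary polynomial identities via the conventional associate. Since the coefficients of $T_l^k$ and $S_l^k$ lie in the prime field $\GF{p}$, they are fixed by the Frobenius $x\mapsto x^p$; consequently the set of $p$-polynomials over $\GF{p}$, equipped with addition and composition, forms a commutative ring isomorphic to $\GF{p}[Y]$ under the map $X^{p^i}\mapsto Y^i$, with composition corresponding to multiplication. Under this isomorphism a direct geometric-series computation gives the associates
\begin{equation*}
T_l^k(X)\longmapsto \frac{Y^k-1}{Y^l-1},\qquad S_l^k(X)\longmapsto \frac{1-(-1)^{k/l}Y^k}{1+Y^l},
\end{equation*}
where the quotients are genuine polynomials because $l\mid k$ (they are manifestly finite sums). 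In particular $S_k^{2k}\mapsto 1-Y^k$, $T_l^{2l}\mapsto 1+Y^l$ and $S_l^{2l}\mapsto 1-Y^l$.

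With these formulas each identity in Parts~\ref{lem_properties:1}--\ref{lem_properties:3} becomes a one-line product of rational functions. For instance $T_l^k\circ T_m^l$ corresponds to $\frac{Y^k-1}{Y^l-1}\cdot\frac{Y^l-1}{Y^m-1}=\frac{Y^k-1}{Y^m-1}$, the associate of $T_m^k$; and $T_l^k\circ S_m^l$ corresponds to $\frac{Y^k-1}{Y^l-1}\cdot\frac{1-(-1)^{l/m}Y^l}{1+Y^m}$. The role of every parity hypothesis is exactly to fix the sign $(-1)^{\,\cdot}$ so that the product collapses to the claimed associate: when $l/m$ is even the factor $1-(-1)^{l/m}Y^l=1-Y^l=-(Y^l-1)$ cancels the denominator $Y^l-1$, leaving $\frac{1-Y^k}{1+Y^m}$, and evenness of $l/m$ forces $k/m$ even so that this equals the associate of $S_m^k$. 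Part~\ref{lem_properties:2} is especially transparent, since $T_l^{2l}$ has associate $1+Y^l$, so $S_l^k\circ T_l^{2l}\mapsto 1-(-1)^{k/l}Y^k$, which is $1-Y^k$ or $1+Y^k$ according to the parity of $k/l$. The remaining cases are handled identically, each time using that the divisibility makes the relevant quotient a polynomial and that the parity assumption pins down the sign; since the associate map is an injective ring homomorphism, equality of associates yields equality of the $p$-polynomials.

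Part~\ref{lem_properties:4} is of a different nature: it is not an identity of polynomials but an equality of the functions they induce on $\GF{p^n}$, so the associate calculus does not apply. Here I would argue directly, writing $d=(n,k)$ and exploiting $x^{p^n}=x$ for $x\in\GF{p^n}$ to reduce every exponent $p^{ki}$ to $p^{(ki\bmod n)}$. Writing $k=dk'$ with $\gcd(k',n/d)=1$ and noting $[n,k]/k=n/d$, the map $i\mapsto k'i\bmod (n/d)$ permutes $\{0,1,\dots,n/d-1\}$ and carries the residues $ki\bmod n$ bijectively onto the multiples $dj$, $0\le j\le n/d-1$; summing $x^{p^{ki}}$ over $i$ therefore reproduces $T_d^n(x)$, which is the first assertion.

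For the signed statement the extra difficulty, and the step I expect to be the main obstacle, is matching the signs $(-1)^i$ after the reindexing. The hypothesis that $[n,k]/k=n/d$ is even is precisely what resolves this: since $\gcd(k',n/d)=1$ with $n/d$ even, $k'$ must be odd, whence $(-1)^{k'i}=(-1)^i$, and because $n/d$ is even, reduction modulo $n/d$ preserves parity. The sign attached to the $j$-th term after permutation therefore agrees with $(-1)^j$, and the sum collapses to $S_d^n(x)$. Careful parity bookkeeping in this last step is the only genuinely delicate point of the proof.
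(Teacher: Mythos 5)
Your proof is correct in all four parts. Note that the paper itself offers no proof of this lemma --- it is quoted from the reference [MKCL2019] --- so there is no in-paper argument to compare against; what can be said is that your method is exactly in the spirit of the machinery the paper sets up in Section~\ref{sec2}. For Parts~\ref{lem_properties:1}--\ref{lem_properties:3} you pass to the conventional $p$-associates $t_l^k(Y)=\frac{Y^k-1}{Y^l-1}$ and $s_l^k(Y)=\frac{1-(-1)^{k/l}Y^k}{1+Y^l}$ and use that composition of $p$-polynomials corresponds to multiplication of associates (Lemma 3.59 of Lidl--Niederreiter, which the paper invokes); each identity then reduces to a cancellation of rational functions, with the parity hypotheses pinning down the signs, and I verified that every case closes (e.g.\ for $T_l^k\circ S_m^l$ with $l/m$ even one needs $k/m$ even, which follows from $l\mid k$ as you say). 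For Part~\ref{lem_properties:4}, which is an identity of functions on $\GF{p^n}$ rather than of polynomials, your reindexing $i\mapsto k'i \bmod (n/d)$ with $k=dk'$, $\gcd(k',n/d)=1$ is the standard argument, and the parity bookkeeping for the signed case (evenness of $n/d$ forces $k'$ odd and makes reduction mod $n/d$ parity-preserving) is exactly the point where the hypothesis that $[n,k]/k$ is even is used. The likely alternative, proving Parts~\ref{lem_properties:1}--\ref{lem_properties:3} by directly expanding the double sums and reindexing, is more laborious and error-prone; your associate-ring route is cleaner and is the one the rest of the paper implicitly relies on (cf.\ the computations of $w$, $u$, $v$ in Section~\ref{Main-results}).
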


The  set of $p-$polynomials over $\GF{p}$ forms an integral domain
under the operations of symbolic multiplication (composition of
polynomials) and ordinary addition (e.g. see page 115 in
\cite{Lidl1997}). Therefore, under the symbolic multiplication, any
two $p-$polynomials over $\GF{p}$ are commutative. This fact will be
implicitly used throughout this paper.

\begin{definition} [Definition 3.58 in \cite{Lidl1997}]
 The polynomials
\[
l(X)=\sum_{i=0}^{t}\alpha_i x^i \text{ and }
L(X)=\sum_{i=0}^t\alpha_i x^{p^i}
\]
over $\GF{p}$ are called $p-$associates of each other. More
specifically, $l(X)$ is the conventional $p-$associate of $L(X)$ and
$L(X)$ is the linearized $p-$associate of $l(X)$.
\end{definition}

In this paper, linearized polynomials are capitalized and the
conventional $p-$associates are denoted by the corresponding
lowercases. For example, we denote the conventional $p-$associates
of $T_l^k(X),S_l^k(X)$ as $t_l^k(X),s_l^k(X)$ respectively. For the
linearized $p-$associate of a polynomial $l(X)$, we sometimes also
use the denotation $\overline{l}(X)$ (together with the promised
denotation $L(X)$).

Let $L_1(X)$ and $L_2(X)$ be $p-$polynomials over $\GF{p}$ with
conventional $p-$associates $l_1(X)$ and $l_2(X)$. Then,
$l_1(X)+l_2(X)$ and $L_1(X)+L_2(X)$ are $p-$associates of each
other, and $l_1(X)\cdot l_2(X)$ and $L_1\circ L_2(X)$ are
$p-$associates of each other (e.g. see Lemma 3.59 in
\cite{Lidl1997}). Furthermore, the following properties  are
equivalent:  (i) $L_1(X)$ symbolically  divides $L_2(X)$; (ii)
$L_1(X)$ divides $L_2(X)$ in the ordinary sense;  (iii) $l_1(X)$
divides $l_2(X)$ (Theorem 3.62 in \cite{Lidl1997}). From these
facts, one deduces
\begin{proposition}\label{interKer}
Let $(l_1,l_2)=l_0$ and $L_1,L_2,L_0$ be $p-$polynomials over
$\GF{p}$ with conventional $p-$associates $l_1,l_2,l_0$,
respectively. Then
$$\ker(L_0)=\ker(L_1)\cap\ker(L_2).$$
\end{proposition}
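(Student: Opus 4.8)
The plan is to exploit the ring isomorphism between the ordinary polynomial ring $(\GF{p}[X],+,\cdot)$ and the ring of $p$-polynomials over $\GF{p}$ under addition and symbolic multiplication (composition), under which a polynomial $l$ corresponds to its linearized $p$-associate $\overline{l}=L$. The facts recalled just above the statement---that $l_1+l_2$ corresponds to $L_1+L_2$, that $l_1\cdot l_2$ corresponds to $L_1\circ L_2$, and that divisibility is preserved in all three equivalent senses---are precisely what is needed to transport identities between the two rings. I would prove the two inclusions separately.

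First I would establish $\ker(L_0)\subseteq\ker(L_1)\cap\ker(L_2)$. Since $l_0=(l_1,l_2)$ divides both $l_1$ and $l_2$, the divisibility equivalence provides $p$-polynomials $M_1,M_2$ with $L_1=M_1\circ L_0$ and $L_2=M_2\circ L_0$. As every $p$-polynomial has zero constant term, $M_i(0)=0$; hence $L_0(x)=0$ forces $L_1(x)=M_1(L_0(x))=M_1(0)=0$ and likewise $L_2(x)=0$. This settles one inclusion for every $x$, in particular over any $\GF{p^n}$.

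For the reverse inclusion $\ker(L_1)\cap\ker(L_2)\subseteq\ker(L_0)$, the key step is a B\'ezout argument. Because $\GF{p}[X]$ is a Euclidean domain and $l_0=(l_1,l_2)$, there exist $a,b\in\GF{p}[X]$ with $a\,l_1+b\,l_2=l_0$. Passing to linearized $p$-associates and using that products become compositions while sums remain sums, this identity lifts to
$$L_0(X)=A\circ L_1(X)+B\circ L_2(X)=A\bigl(L_1(X)\bigr)+B\bigl(L_2(X)\bigr),$$
where $A,B$ are the linearized $p$-associates of $a,b$. Consequently, if $x\in\ker(L_1)\cap\ker(L_2)$ then $L_1(x)=L_2(x)=0$, and since again $A(0)=B(0)=0$ we obtain $L_0(x)=0$. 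Combining the two inclusions yields $\ker(L_0)=\ker(L_1)\cap\ker(L_2)$.

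The only point requiring care---the main obstacle, such as it is---is the faithful transport of the B\'ezout identity: one must invoke both the additive and the multiplicative compatibilities of the $p$-associate correspondence \emph{simultaneously}, and observe that the vanishing of the constant term of each $p$-polynomial is what makes evaluations at a common zero collapse correctly. Everything else is a direct application of the facts collected before the statement, so I do not anticipate any genuinely hard computation.
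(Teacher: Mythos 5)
Your proof is correct and follows essentially the same route as the paper's: a B\'ezout identity $a\,l_1+b\,l_2=l_0$ lifted to $A\circ L_1+B\circ L_2=L_0$ gives $\ker(L_1)\cap\ker(L_2)\subseteq\ker(L_0)$, and the divisibility of $l_1,l_2$ by $l_0$ gives the reverse inclusion. You merely present the two inclusions in the opposite order and spell out the vanishing-constant-term detail that the paper leaves implicit.
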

\begin{proof} There exist $l'_1, l'_2$ such that
$l_1l'_1+l_2l'_2=l_0.$ Then, it holds
$$L_1\circ L'_1+L_2\circ L'_2=L_0.$$ Therefore, $\ker(L_1)\cap\ker(L_2)\subset \ker(L_0).$
Since $l_0$ divides $l_1$ and $l_2$, also $L_0$ divides $L_1$ and
$L_2$. Hence, $\ker(L_1)\cap\ker(L_2)\supset \ker(L_0).$ This proves
the equality.\qed
\end{proof}
\section{Main results}\label{Main-results}

As mentioned in the introductory section, we can restrict our study
to $p-$polynomials without multiple roots. Let $L(X)$ be a
$p-$polynomial without multiple roots. Further, let us suppose that
some positive integer $k$ is given such that the $p-$polynomial
$L(X)$ symbolically divides $S_k^{2k}(X)$ (such a $k$ always
exists; for example, we can think of the degree of the splitting
field of $L$). Let $d=\gcd(n,k)$. Example 3.61 of \cite{Lidl1997}
shows that then there exists a $p-$polynomial $L'(X)$ over $\GF{p}$
such that
\[
S_k^{2k}(X)=L\circ L'(X)=L'\circ L(X)=L'(L(X)).
\]

The following fact is simple but very useful.

\begin{proposition}\label{KerL}
Let $L$ and $L'$ be $p-$polynomials. If $S_k^{2k}(X)=L\circ L'(X)$
for some positive integer $k$, then $\ker(L)=L'(\GF{p^k})$.
\end{proposition}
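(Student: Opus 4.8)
The plan is to prove the equality $\ker(L)=L'(\GF{p^k})$ by establishing one inclusion directly and then forcing equality through a cardinality count. The forward inclusion $L'(\GF{p^k})\subseteq\ker(L)$ is immediate: for any $x\in\GF{p^k}$ we have $L(L'(x))=(L\circ L')(x)=S_k^{2k}(x)=x-x^{p^k}=0$, since $x^{p^k}=x$ for every $x\in\GF{p^k}$. Hence $L'(x)\in\ker(L)$, and the whole image $L'(\GF{p^k})$ lies in $\ker(L)$.

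For the reverse inclusion I would count both sides. Writing $\deg L=p^s$ and $\deg L'=p^t$, and using that composition of $p$-polynomials multiplies degrees together with $\deg S_k^{2k}=p^k$, we obtain $s+t=k$. The polynomial $S_k^{2k}(X)=X-X^{p^k}$ is separable, since its formal derivative equals $1$ in characteristic $p$; therefore any ordinary divisor of it is separable as well. Passing through the conventional $p$-associates, the identity $L\circ L'=S_k^{2k}$ yields $l\cdot l'=s_k^{2k}$, so $l$ and $l'$ both divide $s_k^{2k}$, and hence (by the equivalence between symbolic and ordinary divisibility recalled above) $L$ and $L'$ both divide $S_k^{2k}$. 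Being separable linearized polynomials, their kernels consist of distinct roots, so $|\ker(L)|=p^s$ and $|\ker(L')|=p^t$.

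To count $L'(\GF{p^k})$ I would view $L'$ as an $\GF{p}$-linear map on the $k$-dimensional $\GF{p}$-space $\GF{p^k}$ and apply rank--nullity, giving $|L'(\GF{p^k})|=p^k/|\ker(L')\cap\GF{p^k}|$. The key step, which I expect to be the main obstacle, is to show $\ker(L')\subseteq\GF{p^k}$ so that $\ker(L')\cap\GF{p^k}=\ker(L')$ has the full size $p^t$. This follows because $L'$ divides $S_k^{2k}$, so every root of $L'$ is a root of $S_k^{2k}$, and the roots of $S_k^{2k}$ are precisely the elements of $\GF{p^k}$. Consequently $|L'(\GF{p^k})|=p^k/p^t=p^s=|\ker(L)|$.

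Finally, since $L'(\GF{p^k})\subseteq\ker(L)$ and both are finite sets of the same cardinality $p^s$, they must coincide, which proves $\ker(L)=L'(\GF{p^k})$. The delicate point throughout is the separability bookkeeping: it is what guarantees that the kernels have exactly $p^s$ and $p^t$ elements rather than fewer, and it is the inclusion $\ker(L')\subseteq\GF{p^k}$ that makes the rank--nullity count yield exactly $p^s$.
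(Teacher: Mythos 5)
Your proof is correct and follows essentially the same route as the paper: the easy inclusion $L'(\GF{p^k})\subseteq\ker(L)$ plus a rank--nullity count of the image, concluding by equal cardinalities. The only difference is cosmetic: the paper gets by with the inequality $|\ker(L')\cap\GF{p^k}|\le|\ker(L')|$ to obtain $|L'(\GF{p^k})|\ge|\ker(L)|$, whereas you prove the sharper fact $\ker(L')\subseteq\GF{p^k}$; both close the argument the same way.
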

\begin{proof}
In fact, it is obvious $L'(\GF{p^k})\subset \ker(L)$, and on the
other hand $|L'(\GF{p^k})|=\frac{p^k}{|\ker(L')\cap \GF{p^k}|}\geq
\frac{p^k}{|\ker(L')|}= \frac{p^k}{\deg(L')}=\deg(L)=|\ker(L)|$, and
therefore $\ker(L)=L'(\GF{p^k})$.\qed
\end{proof}

  To start our study, we should define some polynomials which will be
needed to establish our main results. Define $w(X)=(l'(X),
t_d^k(X)),$ $u(X):=\frac{l'(X)}{w(X)}$ and
$v(X):=\frac{t_{d}^k(X)}{w(X)}.$ Since $(u(X),v(X))=1$, we can
define $f=u^{-1} \mod v$ and $g=v^{-1} \mod u$ so that $u\cdot
f+v\cdot g=1.$ Then, by the properties of $p-$associates mentioned
in Sec.~\ref{sec2}, obviously it holds
\begin{equation}\label{invV}
 U\circ
F(X)+V\circ G(X)=X,
\end{equation}
\begin{equation}\label{UWL'VWT}
T_d^k=V\circ W \text{ and } U\circ W=L' \text{ and } U\circ T_d^k=L'
\circ V.
\end{equation}

The following fact will be useful in the sequel.
\begin{proposition}\label{LU}
Using the notation above, one has
$$U\circ L=S_d^{2d}\circ V.$$
\end{proposition}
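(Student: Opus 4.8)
The plan is to translate the claimed identity out of the world of symbolic composition into ordinary multiplication of conventional $p$-associates, where everything collapses to a short polynomial calculation. By the dictionary recalled in Section~\ref{sec2} (Lemma 3.59 and Theorem 3.62 of \cite{Lidl1997}), composition of linearized $p$-polynomials corresponds to multiplication of their conventional $p$-associates, so the equality $U\circ L=S_d^{2d}\circ V$ is equivalent to the scalar identity
$$u(X)\,l(X)=s_d^{2d}(X)\,v(X)$$
in $\GF{p}[X]$. Since $S_d^{2d}(X)=X-X^{p^d}$ has conventional $p$-associate $s_d^{2d}(X)=1-X^d$, this is the identity I would aim to verify.

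Two ingredients drive the computation. First, the hypothesis $S_k^{2k}=L\circ L'$ gives, on passing to $p$-associates, $s_k^{2k}=l\,l'$, that is $1-X^k=l(X)l'(X)$. Second, item~\ref{lem_properties:3} of Lemma~\ref{lem_properties}, namely $T_d^k\circ S_d^{2d}=S_k^{2k}$, yields in $p$-associates the factorization
$$s_d^{2d}(X)\,t_d^k(X)=s_k^{2k}(X),\qquad\text{i.e. }(1-X^d)\,t_d^k(X)=1-X^k.$$
With these in hand I would substitute the definitions $u=l'/w$ and $v=t_d^k/w$ (legitimate since $w=(l',t_d^k)$ divides both $l'$ and $t_d^k$) and chase the chain
$$s_d^{2d}\,v=s_d^{2d}\,\frac{t_d^k}{w}=\frac{s_k^{2k}}{w}=\frac{l\,l'}{w}=l\,\frac{l'}{w}=l\,u.$$
Reading this equality back through the $p$-associate correspondence gives exactly $U\circ L=S_d^{2d}\circ V$.

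I do not expect a genuine obstacle here: the statement is an algebraic identity that dissolves once one commits to computing with conventional $p$-associates and invokes the commutativity of symbolic multiplication noted in Section~\ref{sec2}. The only points demanding care are bookkeeping ones, namely correctly recording $s_d^{2d}=1-X^d$ and $s_k^{2k}=1-X^k$, confirming that each division by $w$ is exact before performing it, and applying the two cited facts ($s_k^{2k}=l\,l'$ and $s_d^{2d}\,t_d^k=s_k^{2k}$) in the right direction. Once the scalar identity $u\,l=s_d^{2d}\,v$ is established, lifting it to the linearized identity is immediate and requires no further argument.
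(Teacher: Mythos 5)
Your proof is correct and is essentially identical to the paper's: the authors likewise pass to conventional $p$-associates and run the chain $\overline{l}\circ\overline{(l'/w)}=\overline{(l\,l'/w)}=\overline{(s_k^{2k}/w)}=\overline{(s_d^{2d}\,t_d^k/w)}=S_d^{2d}\circ V$, using exactly your two ingredients $l\,l'=s_k^{2k}$ and $s_d^{2d}\,t_d^k=s_k^{2k}$. The only difference is cosmetic (you write the chain from the right-hand side toward the left).
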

\begin{proof} In fact,
\begin{align*}
U\circ L&=\overline{l}\circ \overline{\left(\frac{ l'}{w}\right)}=
\overline{\left(\frac{l\cdot
l'}{w}\right)}=\overline{\left(\frac{s_k^{2k}}{w}\right)}
\\
&=\overline{\left(\frac{s_d^{2d}\cdot t_d^k}{w}\right)} \text{ (by
Item~\ref{lem_properties:3} of
Lemma~\ref{lem_properties})}\\
&=\overline{s_d^{2d}}\circ
\overline{\left(\frac{t_d^k}{w}\right)}=S_d^{2d}\circ V.
\end{align*}\qed
\end{proof}

\subsection{Determination of $\ker(L)\cap \GF{p^n}$} We first determine the kernel
of $L$ in $\GF{p^n}$.
\begin{lemma}\label{KerinFF}
$\ker(L)\cap \GF{p^n}=U(\GF{p^{d}}).$
\end{lemma}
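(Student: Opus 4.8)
The plan is to prove the two inclusions separately, combining a commutativity argument for one direction with a cardinality count for the other. First I would reduce the ambient field from $\GF{p^n}$ to $\GF{p^d}$. Since $L$ symbolically divides $S_k^{2k}(X)=X-X^{p^k}$, whose roots are exactly $\GF{p^k}$, we have $\ker(L)\subseteq\GF{p^k}$; hence $\ker(L)\cap\GF{p^n}=\ker(L)\cap\GF{p^k}\cap\GF{p^n}=\ker(L)\cap\GF{p^d}$, using $\GF{p^k}\cap\GF{p^n}=\GF{p^d}$ with $d=\gcd(n,k)$. On the other side, $U$ is a $p$-polynomial over $\GF{p}$, so $U(\GF{p^d})\subseteq\GF{p^d}\subseteq\GF{p^n}$. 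Thus it suffices to prove $U(\GF{p^d})=\ker(L)\cap\GF{p^d}$.

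For the inclusion $U(\GF{p^d})\subseteq\ker(L)$, I would use that $p$-polynomials over $\GF{p}$ commute under composition, together with Proposition~\ref{LU}. For $x\in\GF{p^d}$ one computes $L(U(x))=(U\circ L)(x)=(S_d^{2d}\circ V)(x)=S_d^{2d}(V(x))$; since $V$ has coefficients in $\GF{p}$ we have $V(x)\in\GF{p^d}$, and $S_d^{2d}$ annihilates $\GF{p^d}$, so $L(U(x))=0$. Hence $U(\GF{p^d})\subseteq\ker(L)\cap\GF{p^d}$.

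It then remains to match cardinalities, i.e. to show $|U(\GF{p^d})|=|\ker(L)\cap\GF{p^d}|$. As $U$ is $\GF{p}$-linear on $\GF{p^d}$, we have $|U(\GF{p^d})|=p^d/|\ker(U)\cap\GF{p^d}|$, so the claim reduces to $|\ker(L)\cap\GF{p^d}|\cdot|\ker(U)\cap\GF{p^d}|=p^d$. For any $p$-polynomial $P$ with conventional $p$-associate $p(X)$, Proposition~\ref{interKer} applied to $P$ and $S_d^{2d}$ (whose associate is $1-X^d$) identifies $\ker(P)\cap\GF{p^d}$ with the kernel of the linearized associate of $\gcd(p(X),1-X^d)$; since this symbolically divides the separable $S_d^{2d}$, its number of roots is $p^{\deg\gcd(p(X),1-X^d)}$. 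Therefore the desired identity becomes the purely polynomial statement $\deg\gcd(l,1-X^d)+\deg\gcd(u,1-X^d)=d$.

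I expect this last degree identity to be the crux. I would derive it from the two facts $u\cdot l=(1-X^d)\,v$ (the conventional form of Proposition~\ref{LU}) and $\gcd(u,v)=1$, working one irreducible factor $\pi$ of $1-X^d$ at a time and comparing $\pi$-adic valuations; note that $1-X^d$ need not be separable when $p\mid d$, so multiplicities must be tracked. Writing $e=v_\pi(1-X^d)$, the relation gives $v_\pi(u)+v_\pi(l)=e+v_\pi(v)$, while $\gcd(u,v)=1$ forces $v_\pi(v)=0$ whenever $v_\pi(u)>0$; a short case split then yields $\min(v_\pi(l),e)+\min(v_\pi(u),e)=e$ in every case. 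Summing $\deg\pi$ over all $\pi\mid 1-X^d$ produces the identity, and combined with the inclusion established above this forces $U(\GF{p^d})=\ker(L)\cap\GF{p^d}=\ker(L)\cap\GF{p^n}$.
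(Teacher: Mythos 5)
Your proof is correct, but its engine differs from the paper's. Both arguments share the skeleton of one easy inclusion plus a cardinality count, and both invoke Proposition~\ref{LU} and Proposition~\ref{interKer}; the implementations diverge after that. The paper never leaves the image description of the kernel: it uses Proposition~\ref{KerL} to write $\ker(L)\cap\GF{p^n}$ as $L'\left(\{\beta\in\GF{p^k}\mid S_d^{2d}\circ L'(\beta)=0\}\right)$ and $U(\GF{p^d})$ as $L'\left(V(\GF{p^k})\right)$ (via $U\circ T_d^k=L'\circ V$ and surjectivity of $T_d^k$ onto $\GF{p^d}$), reducing the lemma to the set identity $\{\beta\in\GF{p^k}\mid S_d^{2d}\circ L'(\beta)=0\}=V(\GF{p^k})$ inside $\GF{p^k}$, whose counting step identifies $\ker(T_d^k)\cap\ker(L')$ with $\ker(W)$ and gives $p^d\cdot\deg(W)$ on both sides. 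You instead collapse the ambient field to $\GF{p^d}$ at the outset, prove the inclusion $U(\GF{p^d})\subseteq\ker(L)$ directly from $U\circ L=S_d^{2d}\circ V$ and commutativity of $p$-polynomials, and do the counting by applying Proposition~\ref{interKer} twice against $S_d^{2d}$, reducing everything to the gcd-degree identity $\deg\gcd(l,1-X^d)+\deg\gcd(u,1-X^d)=d$, which you prove locally at each irreducible factor from $u\cdot l=(1-X^d)\,v$ and $(u,v)=1$. Your valuation case split is sound (and correctly tracks multiplicities, since $1-X^d$ is inseparable when $p\mid d$), so the argument is complete. What your route buys: it bypasses Proposition~\ref{KerL}, the polynomial $W$, and the auxiliary field $\GF{p^k}$ entirely, turning the counting step into elementary arithmetic in $\GF{p}[X]$. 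What the paper's route buys: the intermediate objects ($L'$-images, $W$, and the set $V(\GF{p^k})$) are exactly the ones reused in Theorem~\ref{MainTheorem}, so its proof doubles as setup for the explicit solution formula.
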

\begin{proof} Both sides of the equality to be proved can be
rewritten as:
\begin{align*}\label{eq_4} \ker(L)\cap
\GF{p^n}&\overset{\text{Prop.~\ref{KerL}}}{=}L'(\GF{p^k})\cap\GF{p^n}=
L'(\GF{p^k})\cap\GF{p^d}\\&=L'\left(\{\beta\in\GF{p^k}\mid
S_{d}^{2d}\circ L'(\beta)=0\}\right).
\end{align*}
and
\begin{equation}\label{UsubsetKerL}
U(\GF{p^{d}})=U\left(T_{d}^k(\GF{p^k})\right)\overset{\eqref{UWL'VWT}}{=}L'\left(V(\GF{p^k})\right).
\end{equation}
Thus it is enough for our purpose to prove
\[
\{\beta\in\GF{p^k}\mid S_{d}^{2{d}}\circ L'(\beta)=0\}=V(\GF{p^k}).
\]
By using \eqref{UsubsetKerL}, it can be easily checked
\begin{equation}\label{VsubsetKerL0}
\{\beta\in\GF{p^k}\mid S_{d}^{2{d}}\circ L'(\beta)=0\}\supset
V(\GF{p^k}).
\end{equation}
From $T_d^k(X)=V\circ W(X)$ (see \eqref{UWL'VWT}), it is obvious
that $W(X)$ has no multiple root and so $$\deg(W)=|\ker(W)|.$$
Furthermore,
$\deg(V)=\frac{\deg(T_d^k)}{\deg(W)}=\frac{p^{k-d}}{\deg(W)}$ and
\begin{equation}\label{eq_7}
|V(\GF{p^k})|\geq
\frac{p^k}{\deg(V)}=\frac{p^k}{\frac{p^{k-{d}}}{\deg(W)}}=p^{d}\cdot
\deg(W).
\end{equation}
From Item~\ref{lem_properties:3} of Lemma~\ref{lem_properties}, one
has $\ker(T_d^k)=S_{d}^{2{d}}(\GF{p^k})$ and thus
\begin{align*}
|\{\beta\in\GF{p^k}\mid S_{d}^{2{d}}\circ L'(\beta)=0\}|&=p^{d}\cdot
|\{\beta\in S_{d}^{2{d}}(\GF{p^k})\mid L'(\beta)=0\}|\\
&=p^{d}\cdot |\{\beta\in \ker(T_d^k)\mid L'(\beta)=0\}|\\
&=p^{d}\cdot |\ker(T_d^k)\cap \ker(L')|\\
&\overset{\text{Prop.~\ref{interKer}}}{=}p^{d}\cdot
|\ker(W)|=p^{d}\cdot\deg(W).
\end{align*}
That is,
\begin{equation}\label{VsupsetKerL0}
|V(\GF{p^k})|\geq |\{\beta\in\GF{p^k}\mid S_{d}^{2{d}}\circ
L'(\beta)=0\}|.
\end{equation}
Combining \eqref{VsubsetKerL0} and \eqref{VsupsetKerL0} completes
the proof. \qed
\end{proof}

\subsection{A solution to the equation $L(X)=a\in \GF{p^n}$}
Now, reminding the definitions of polynomials $U,W,G$ in the
beginning of this section, we provide an explicit expression for a
solution to Equation~\eqref{eq_1}.
\begin{theorem}\label{MainTheorem}  Let $\delta\in \GF{p^n}^*$ and $\delta_1\in \GF{p^{2d}}^*$ be any elements such that
$T_{d}^n(\delta)=1$ and  $S_{d}^{2d}(\delta_1)=1$.
Equation~\eqref{eq_1} has a solution in $\GF{p^n}$ if and only if
$$U\circ T_d^n(a)=0.$$
Specifically, when $U\circ T_d^n(a)=0$,
\begin{equation}\label{particular_solution}
x_0=W(y)+U\circ G(\delta_1\cdot (a-L\circ W(y))),
\end{equation}
where
$$y=\sum_{i=0}^{\frac{n}{d}-2}\sum_{j=i+1}^{\frac{n}{d}-1}\delta^{p^{kj}}U(a)^{p^{ki}},$$
is a particular solution in $\GF{p^n}$ to Equation~\eqref{eq_1}.
\end{theorem}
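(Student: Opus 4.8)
The plan is to prove the solvability criterion and the explicit formula separately, the converse of the criterion being absorbed into the construction. For necessity, suppose $L(x)=a$ with $x\in\GF{p^n}$. Proposition~\ref{LU} gives $U(a)=U\circ L(x)=S_d^{2d}\circ V(x)$ with $V(x)\in\GF{p^n}$; applying $T_d^n$ and using the telescoping identity $T_d^n\circ S_d^{2d}(X)=X-X^{p^n}$, which vanishes on $\GF{p^n}$, yields $T_d^n(U(a))=0$. Since $p$-polynomials over $\GF{p}$ commute under composition, this reads $U\circ T_d^n(a)=0$.

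For sufficiency I would set $c:=U(a)$ and $a_1:=a-L\circ W(y)$, and the crux is to show
\[
S_k^{2k}(y)=y-y^{p^k}=c.
\]
First I would use Item~\ref{lem_properties:4} of Lemma~\ref{lem_properties} to write $\sum_{j=0}^{n/d-1}z^{p^{kj}}=T_k^{[n,k]}(z)=T_d^n(z)$ for every $z\in\GF{p^n}$, turning both the normalisation $T_d^n(\delta)=1$ and the hypothesis $T_d^n(c)=U\circ T_d^n(a)=0$ into statements about the $p^k$-power sums that actually occur in $y$. Writing $\phi=(\cdot)^{p^k}$, which acts as the identity after $n/d$ iterations on $\GF{p^n}$, and abbreviating $A_i=\sum_{j=i+1}^{n/d-1}\phi^j(\delta)$, a direct reindexing of the double sum gives $y-\phi(y)=A_0 c-\delta\sum_{i=1}^{n/d-1}\phi^i(c)$; substituting $A_0=1-\delta$ (from $T_d^n(\delta)=1$) collapses this to $c-\delta\,T_d^n(c)=c$.

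Granting this, I would compute $U(a_1)=c-S_d^{2d}\circ V\circ W(y)=c-S_d^{2d}\circ T_d^k(y)=c-S_k^{2k}(y)=0$, using Proposition~\ref{LU}, the relation $V\circ W=T_d^k$ from \eqref{UWL'VWT}, and $S_d^{2d}\circ T_d^k=S_k^{2k}$ (Item~\ref{lem_properties:3} of Lemma~\ref{lem_properties}, with commutativity). To promote $a_1\in\ker(U)\cap\GF{p^n}$ to $a_1\in\GF{p^d}$, I note that $u\mid l'\mid s_k^{2k}$ forces $U$ to symbolically divide $S_k^{2k}$, so $\ker(U)\subseteq\ker(S_k^{2k})=\GF{p^k}$ and hence $a_1\in\GF{p^k}\cap\GF{p^n}=\GF{p^d}$. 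Then, by $\GF{p}$-linearity and Proposition~\ref{LU}, $L(x_0)=L\circ W(y)+S_d^{2d}\circ V\circ G(\delta_1 a_1)$, and \eqref{invV} rewrites the last term as $S_d^{2d}(\delta_1 a_1)-S_d^{2d}\circ U\circ F(\delta_1 a_1)$. Since $a_1\in\GF{p^d}$ and $S_d^{2d}(\delta_1)=1$ we get $S_d^{2d}(\delta_1 a_1)=a_1$, while by commutativity $S_d^{2d}\circ U\circ F(\delta_1 a_1)=U\circ F(a_1)=F(U(a_1))=0$ (as $U(a_1)=0$); hence the last term equals $a_1$ and $L(x_0)=L\circ W(y)+a_1=a$.

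The main obstacle I anticipate is the middle step, namely the bookkeeping in the double-sum reindexing that yields $y-y^{p^k}=c$. It is delicate because the exponents in $y$ are powers of $p^k$, whereas the normalisation $T_d^n(\delta)=1$ and the trace-zero hypothesis are phrased through $p^d$-power sums; reconciling them hinges on the observation (Item~\ref{lem_properties:4}, valid because $\gcd(n/d,k/d)=1$) that on $\GF{p^n}$ the $p^k$-Frobenius traverses the same Galois orbit as the $p^d$-Frobenius, so that the full $p^k$-power sum coincides with $T_d^n$. The remaining pieces, namely $U(a_1)=0$, the containment $\ker(U)\subseteq\GF{p^k}$, and the final cancellation, then reduce to routine applications of the composition identities already recorded in the excerpt.
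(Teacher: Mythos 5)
Your proposal is correct and follows essentially the same route as the paper's own proof: the same necessity argument via Proposition~\ref{LU} and Item~\ref{lem_properties:3}, the same telescoping reindexing giving $y-y^{p^k}=U(a)$ under $T_d^n(\delta)=1$ and $U\circ T_d^n(a)=0$, and the same finish via \eqref{invV}, \eqref{UWL'VWT} and commutativity (you are in fact slightly more careful than the paper in justifying $a_1\in\GF{p^d}$, which is implicitly needed for $S_d^{2d}(\delta_1 a_1)=a_1$). The only line to add is the membership claim $x_0\in\GF{p^n}$: since $\delta_1 a_1\in\GF{p^{2d}}$, you should note that $z:=U\circ G(\delta_1 a_1)$ satisfies $S_d^{2d}(z)=U\circ G\circ S_d^{2d}(\delta_1 a_1)=U\circ G(a_1)=G(U(a_1))=0$, hence $z\in\GF{p^d}\subseteq\GF{p^n}$ --- the same one-line commutativity computation you already carry out for the $U\circ F$ term.
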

\begin{proof}
$(\Longrightarrow)$ Let $L(x_0)=a$ for $x_0\in\GF{p^n}$. Then
\begin{align*}
U\circ T_d^n(a)&= U\circ T_d^n\circ L(x_0)=U\circ L\circ T_d^n(x_0)\\
&=S_{d}^{2d}\circ
V\circ T_d^n(x_0) \text{ (by Proposition~\ref{LU})}\\
&=V\circ S_n^{2n}(x_0) \text{ (by Item~\ref{lem_properties:3} of Lemma~\ref{lem_properties})}\\
&=V(0)=0.
\end{align*}

$(\Longleftarrow)$ Now,  we will show that under assumption $U\circ
T_d^n(a)=0$, \eqref{particular_solution} is really a particular
solution in $\GF{p^n}$ to equation (\ref{eq_1}). First of all, we
have
\begin{align*}
L\circ L'(y)&=y-y^{p^k}\\
&=\sum_{i=0}^{\frac{n}{d}-2}\sum_{j=i+1}^{\frac{n}{d}-1}\delta^{p^{kj}}U(a)^{p^{ki}}-\sum_{i=0}^{\frac{n}{d}-2}\sum_{j=i+1}^{\frac{n}{d}-1}\delta^{p^{k(j+1)}}U(a)^{p^{k(i+1)}}\\
&=\sum_{i=0}^{\frac{n}{d}-2}\sum_{j=i+1}^{\frac{n}{d}-1}\delta^{p^{kj}}U(a)^{p^{ki}}-\sum_{i=1}^{\frac{n}{d}-1}\sum_{j=i+1}^{\frac{n}{d}}\delta^{p^{kj}}U(a)^{p^{ki}}\\
&=\sum_{j=1}^{\frac{n}{d}-1}\delta^{p^{kj}}U(a)-\sum_{i=1}^{\frac{n}{d}-2}\delta^{p^{\frac{nk}{d}}}U(a)^{p^{ki}}-\delta^{p^{\frac{nk}{d}}}U(a)^{p^{k(\frac{n}{d}-1)}}\\
&=\sum_{j=1}^{\frac{n}{d}-1}\delta^{p^{kj}}U(a)-\sum_{i=1}^{\frac{n}{d}-1}\delta U(a)^{p^{ki}}\\
&=\sum_{j=1}^{\frac{n}{d}-1}\delta^{p^{kj}}U(a)-\delta\sum_{i=1}^{\frac{n}{d}-1}U(a)^{p^{ki}}\\
&=\sum_{j=0}^{\frac{n}{d}-1}\delta^{p^{kj}}U(a)-\delta\sum_{i=0}^{\frac{n}{d}-1}U(a)^{p^{ki}}\\
&=T_d^n(\delta)U(a)-\delta T_k^{[n,k]}\circ U(a)\\
&=U(a)-\delta T_d^{n}\circ U(a) \text{ (by
Item~\ref{lem_properties:4} of
Lemma~\ref{lem_properties})}\\
&=U(a).
\end{align*}
That is,
$$L\circ L'(y)=U(a),$$
or, by using \eqref{UWL'VWT}, equivalently
\begin{equation}\label{kerUe}
U(a-L\circ W(y))=0.
\end{equation}

Set
\[
z=U\circ G(\delta_1\cdot (a-L\circ W(y))).
\]
Then, $z\in\GF{p^{d}}$ because
\begin{align*}
S_{d}^{2{d}}(z)&=S_{d}^{2{d}} \circ U\circ
G(\delta_1\cdot (a-L\circ W(y)))\\
&= G\circ U(a-L\circ W(y))\overset{\eqref{kerUe}}{=}0.
\end{align*}
 Furthermore,
\begin{align*}
L(z)&=L\circ U\circ G(\delta_1\cdot
(a-L\circ W(y)))\\
&\overset{\text{Prop.~\ref{LU}}}{=}S_{d}^{2{d}}\circ V\circ G(\delta_1\cdot (a-L\circ W(y)))\\
&=V\circ G\circ
S_{d}^{2{d}}(\delta_1\cdot (a-L\circ W(y)))\\
&=V\circ G(a-L\circ W(y))\overset{\eqref{invV}}{=}a-L\circ W(y)
\end{align*}
and therefore, with $x_0=W(y)+z \in\GF{p^n}$, one has
\[
L(x_0)=L\circ W(y)+L(z)=L\circ W(y)+a-L\circ W(y)=a.
\]
\qed
\end{proof}

If $p\nmid\frac{k}{d}$, a more concise formula for a particular
solution can be stated as follows.
\begin{theorem}\label{Maintheorem_1}
Suppose $p\nmid\frac{k}{d}$ and let  $\delta$ be any element of
$\GF{p^{2n}}$ such that $S_n^{2n}(\delta)=1$. Equation~\eqref{eq_1}
has a solution in $\GF{p^n}$ if and only if $$L'\circ T_d^n(a)=0.$$
Specifically,
\begin{equation}\label{particular_solution_1}
x_0=\frac{d}{k}\cdot L'\circ T_k^{[n, k]}(\delta\cdot a)
\end{equation}
is a solution to Equation~\eqref{eq_1} which belongs to $\GF{p^n}$
under the condition $L'\circ T_d^n(a)=0$.
\end{theorem}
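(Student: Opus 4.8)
The plan is to establish both the solvability criterion and the correctness of the explicit formula \eqref{particular_solution_1} by direct computation, exploiting the commutativity of $p$-polynomials over $\GF{p}$ under composition, the identity $L'\circ L=S_k^{2k}$, and Item~\ref{lem_properties:4} of Lemma~\ref{lem_properties}. The hypothesis $p\nmid\frac kd$ enters only to make the scalar $\frac kd$ invertible in $\GF{p}$, so that the normalizing factor $\frac dk$ is a well-defined nonzero element of $\GF{p}$.

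For necessity of the condition, I would suppose $L(x_0)=a$ with $x_0\in\GF{p^n}$ and compute
$$L'\circ T_d^n(a)=L'\circ T_d^n\circ L(x_0)=L'\circ L\circ T_d^n(x_0)=S_k^{2k}\bigl(T_d^n(x_0)\bigr),$$
where $T_d^n$ and $L$ commute since both are $p$-polynomials over $\GF{p}$. Because $T_d^n(x_0)\in\GF{p^d}$ and $d\mid k$, the element $T_d^n(x_0)$ lies in $\GF{p^k}$, which is exactly the kernel of $S_k^{2k}(X)=X-X^{p^k}$; hence $L'\circ T_d^n(a)=0$.

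For sufficiency, assume $L'\circ T_d^n(a)=0$, set $x_0=\frac dk\cdot L'\circ T_k^{[n,k]}(\delta\cdot a)$, and verify the two required properties separately. First, to show $x_0\in\GF{p^n}$, I compute $x_0-x_0^{p^n}$; since $L'$ and $T_k^{[n,k]}$ are $p$-polynomials over $\GF{p}$, the Frobenius $p^n$ passes inside both, and using $\delta a-(\delta a)^{p^n}=(\delta-\delta^{p^n})a=a$ (because $a\in\GF{p^n}$ and $S_n^{2n}(\delta)=1$) together with $T_k^{[n,k]}(a)=T_d^n(a)$ from Item~\ref{lem_properties:4}, the difference collapses to $\frac dk\,L'\circ T_d^n(a)=0$. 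Second, to show $L(x_0)=a$, I use $L\circ L'=S_k^{2k}$ and the telescoping identity $S_k^{2k}\circ T_k^{[n,k]}(Y)=Y-Y^{p^{[n,k]}}$, obtaining $L(x_0)=\frac dk\bigl(\delta a-(\delta a)^{p^{[n,k]}}\bigr)=\frac dk(\delta-\delta^{p^{[n,k]}})a$, where $n\mid[n,k]$ was used to replace $a^{p^{[n,k]}}$ by $a$.

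The crux is the scalar evaluation $\delta-\delta^{p^{[n,k]}}=\frac kd$, and I expect this to be the main obstacle. Writing $m=\frac kd$ so that $[n,k]=mn$, I would telescope
$$\delta-\delta^{p^{mn}}=\sum_{i=0}^{m-1}\bigl(\delta^{p^{ni}}-\delta^{p^{n(i+1)}}\bigr)=\sum_{i=0}^{m-1}(\delta-\delta^{p^n})^{p^{ni}},$$
and observe that each summand equals $1^{p^{ni}}=1$ precisely because $\delta-\delta^{p^n}=S_n^{2n}(\delta)=1$ lies in $\GF{p}$, giving $\delta-\delta^{p^{[n,k]}}=m=\frac kd$ and hence $L(x_0)=\frac dk\cdot\frac kd\cdot a=a$. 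This step explains the role of every hypothesis: it is where $p\nmid\frac kd$ is indispensable (otherwise $\frac kd$ vanishes in $\GF{p}$ and the factor $\frac dk$ is undefined), and it is why $\delta$ is taken in $\GF{p^{2n}}$ rather than $\GF{p^n}$, namely to guarantee the existence of an element with $S_n^{2n}(\delta)=1$.
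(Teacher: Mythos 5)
Your proposal is correct and follows essentially the same route as the paper: both verify $L(x_0)=a$ via $L\circ L'=S_k^{2k}$ and the identity $S_k^{2k}\circ T_k^{[n,k]}=S_{[n,k]}^{2[n,k]}$ with the scalar $\delta-\delta^{p^{[n,k]}}=\frac kd$, and both check $x_0\in\GF{p^n}$ by applying $S_n^{2n}$ and invoking Item~\ref{lem_properties:4}. The only cosmetic difference is in the necessity direction, where you evaluate $S_k^{2k}$ at $T_d^n(x_0)\in\GF{p^d}\subseteq\GF{p^k}$ instead of the paper's reduction to $S_{[n,k]}^{2[n,k]}(x)=0$; both are valid one-line commutativity arguments.
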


\begin{proof} First of all,
\begin{align*}
L(x_0)&=\frac{d}{k}\cdot L\circ L'\circ T_k^{[n, k]}(\delta\cdot
a)=\frac{d}{k}\cdot S_k^{2k}\circ
T_k^{[n, k]}(\delta\cdot a)\\
&=\frac{d}{k}\cdot S_{[n, k]}^{2[n, k]}(\delta\cdot a)  \text{ (by
Item~\ref{lem_properties:3} of Lemma~\ref{lem_properties})}\\
&=\frac{d}{k}\cdot S_{[n, k]}^{2[n, k]}(\delta)\cdot a\\
&=\frac{d}{k}\cdot \frac{k}{d}\cdot S_{n}^{2n}(\delta)\cdot a=a.
\end{align*}
$(\Longrightarrow)$ Let $L(x)=a$ for $x\in\GF{p^n}$. Then,
\begin{align*}
L'\circ T_d^n(a)&=T_d^{n}\circ L'(a)\\
&=T_k^{[n,k]}\circ L'(a) \text{ (by Item~\ref{lem_properties:4} of Lemma~\ref{lem_properties})}\\
&=T_k^{[n,k]}\circ L\circ L'(x)=T_k^{[n,k]}\circ
S_k^{2k}(x)\\
&=S_{[n, k]}^{2[n, k]}(x)   \text{ (by Item~\ref{lem_properties:3} of Lemma~\ref{lem_properties})}\\
&=0.
\end{align*}
$(\Longleftarrow)$

Assume $L'\circ T_d^n(a)=0$ i.e. $L'\circ T_k^{[n,k]}(a)=0$ (by
Item~\ref{lem_properties:4} of Lemma~\ref{lem_properties}).
 Then,
\begin{align*}
S_n^{2n}(x_0)=\frac{d}{k}\cdot L'\circ T_k^{[n, k]}\circ
S_n^{2n}(\delta\cdot a)=\frac{d}{k}\cdot L'\circ T_k^{[n, k]}(a)=0
\end{align*}
and therefore $x_0\in\GF{p^n}$. \qed
\end{proof}

\section{Examples}\label{Examples}
The following two examples show that this paper can be considered as
a generalization of \cite{MKCL2019} and \cite{MKCL2020}. Set $e
:=\gcd(n,l)=\gcd(d,l)$.

\begin{example}\label{ExamTkl}
 Let $L=T_l^k$. With $L'=S_l^{2l}$, Item~\ref{lem_properties:3} of Lemma~\ref{lem_properties} let us know $L\circ
L'=S_k^{2k}$.  In this case,
$$w(X)=(t_d^k(X),
s_l^{2l}(X))=\left(\frac{1-X^k}{1-X^d},
1-X^l\right)=\begin{cases}1-X^l, &\text{
if $p\mid \frac{k}{[d,l]}$}\\
\frac{1-X^l}{1-X^e}, &\text{ otherwise.} \end{cases}$$ Hence,
$$u(X)=\frac{1-X^l}{w(X)}=\begin{cases}1, &\text{
if $p\mid \frac{k}{[d,l]}$}\\
1-X^e, &\text{ otherwise.} \end{cases}$$ and
$$U(X)=\overline{u}(X)=\begin{cases}X, &\text{
if $p\mid \frac{k}{[d,l]}$}\\
S_e^{2e}(X), &\text{ otherwise.} \end{cases}$$ Therefore,
Lemma~\ref{KerinFF} gives Theorem 1 in \cite{MKCL2020} as a
consequence. Also, Theorem~\ref{MainTheorem} of this paper and
Theorem 4 in \cite{MKCL2020} provide the same ``if and only if''
conditions for $T_l^k(X)=a, a\in \GF{p^n}$ to has a solution in
$\GF{p^n}$, which is $U\circ T_d^n(a)=0$. When $p\mid
\frac{k}{[d,l]}$, it is also evident that the explicit expressions
for a particular solution also coincide in both papers. However,
when $p\nmid \frac{k}{[d,l]}$, it is not obvious whether or not the
particular solutions in both papers coincide too.\qed
\end{example}

\begin{example}\label{ExamSkl}
 Let $L=S_l^k$. Item~\ref{lem_properties:2} of Lemma~\ref{lem_properties} let us know that $L\circ
L'=S_k^{2k}$ with $L'=T_l^{2l}$ if $\frac{k}{l}$ is even and
 that $L\circ
L'=S_{2k}^{4k}$ with $L'=T_l^{2l}\circ S_k^{2k}$ if $\frac{k}{l}$ is
odd.

If $\frac{k}{l}$ is even, then
$$w(X)=\left(\frac{1-X^k}{1-X^d}, 1+X^l\right)=\begin{cases}1+X^l,
&\text{
if $\frac{d}{e}$ is odd, or, $\frac{d}{e}$ is even and $p\mid \frac{k}{[d,l]}$}\\
\frac{1+X^l}{1+X^e}, &\text{ if $\frac{d}{e}$ is even and $p\nmid
\frac{k}{[d,l]}$.} \end{cases}$$

Hence,
$$u(X)=\frac{1+X^l}{w(X)}=\begin{cases}1, &\text{
if $\frac{d}{e}$ is odd, or, $\frac{d}{e}$ is even and $p\mid \frac{k}{[d,l]}$}\\
1+X^e, &\text{ if $\frac{d}{e}$ is even and $p\nmid
\frac{k}{[d,l]}$.} \end{cases}$$ and
$$U(X)=\overline{u}(X)=\begin{cases}X, &\text{
if $\frac{d}{e}$ is odd, or, $\frac{d}{e}$ is even and $p\mid \frac{k}{[d,l]}$}\\
T_e^{2e}(X), &\text{ if $\frac{d}{e}$ is even and $p\nmid
\frac{k}{[d,l]}$.} \end{cases}$$

Thus, Lemma~\ref{KerinFF} gives Theorem 2 in \cite{MKCL2020} as a
consequence, and  Theorem~\ref{MainTheorem} of this paper  provides
the same ``if and only if'' conditions for $S_l^k(X)=a, a\in
\GF{p^n}$ to has a solution in $\GF{p^n}$ as in Theorem 5 in
\cite{MKCL2020}.

If $\frac{k}{l}$ is odd, then
\begin{align*}w(X)&=\left(\frac{1-X^{2k}}{1-X^{(n,2k)}},
(1+X^l)(1-X^k)\right)\\&=\begin{cases}\frac{(1+X^l)(1-X^k)}{1-X^d},
&\text{
if $\frac{n}{d}$ is odd, or, $\frac{n}{d}$ is even and $p\mid \frac{k}{[d,l]}$,}\\
\frac{(1+X^l)(1-X^k)}{(1+X^e)(1-X^d)}, &\text{ if $\frac{n}{d}$ is
even and $p\nmid \frac{k}{[d,l]}$.} \end{cases}
\end{align*}
Hence,
$$U(X)=\begin{cases}S_d^{2d}, &\text{
if $\frac{n}{d}$ is odd, or, $\frac{n}{d}$ is even and $p\mid \frac{k}{[d,l]}$,}\\
T_e^{2e}\circ S_d^{2d}(X), &\text{ if $\frac{n}{d}$ is even and
$p\nmid \frac{k}{[d,l]}$.} \end{cases}$$

Since $S_d^{2d}\circ T_{2d}^n=S_d^n$ (Item~\ref{lem_properties:1} of
Lemma~\ref{lem_properties}) and $S_d^{2d}\circ T_d^n(\GF{p^n})=0$,
Lemma~\ref{KerinFF} gives Theorem 3 in \cite{MKCL2020} as a
consequence, and Theorem~\ref{MainTheorem} of this paper  provides
the same ``if and only if'' conditions for $S_l^k(X)=a, a\in
\GF{p^n}$ to has a solution in $\GF{p^n}$ as in Theorem 6 in
\cite{MKCL2020}.\qed
\end{example}

The following example discusses one case which could not be solved
in  \cite{MKCL2020} and \cite{MKCL2019}.

\begin{example} Let $L(X)=X+X^2+X^{2^3}\in \GF{2}[X].$ One wants to solve the equation
\begin{equation}\label{examEq2}
L(X)=a, a\in \GF{2^n}.
\end{equation} It holds $L\circ L'(X)=X^{2^7}+X$ with
$L'(X)=X+X^2+X^{2^2}+X^{2^4}$ and therefore one can take $k=7$. If
$7\mid n$, then $d=7$, and else $d=1$. Therefore, in this case
$$w(X)=\begin{cases}1, &\text{ if $7\mid n$,}\\
\left(\frac{1+X^7}{1+X}, 1+X+X^2+X^4\right)=1+X^2+X^3, &\text{
otherwise}\end{cases}$$ and
$$U(X)=\begin{cases}X+X^2+X^{2^2}+X^{2^4}, &\text{ if $7\mid n$,}\\
X+X^{2}, &\text{ otherwise.}\end{cases}$$ Lemma~\ref{KerinFF} gives:
$$\ker(L)\cap\GF{2^n}=\begin{cases}U(\GF{2^7})=\{x+x^2+x^{2^2}+x^{2^4}\mid x\in \GF{2^7}\}, &\text{ if $7\mid n$,}\\
U(\GF{2})=\{0\}, &\text{ otherwise.}\end{cases}$$
Theorem~\ref{MainTheorem} gives: If $7\nmid n$, then
Equation~\eqref{examEq2} has always a unique solution in $\GF{2^n}$
for any $a\in \GF{2^n}$ (since $U\circ T_d^n(a)=
\tr{n}(a)+\tr{n}(a)^2=0$); If $7\mid n$, then
Equation~\eqref{examEq2} has a solution in $\GF{2^n}$ if and only if
$T_7^n(a+a^2+a^{2^2}+a^{2^4})=0$. Theorem~\ref{MainTheorem} (and
Theorem~\ref{Maintheorem_1}) gives an explicit expression for the
rational solutions as well.\qed
\end{example}

\begin{remark}We finalize this paper by discussing the problem of how to
find such a $k$ that $L(X)$ divides $X-X^{p^k}$ for general
$p-$polynomial $L(X)$. It is obvious that one can consider any
multiple of the order (also called the exponent or the period) of
the conventional $p-$associate $l(X)$. There are many families of
special polynomials with known order. However, computing the order
of a general polynomial is equivalent to factoring that polynomial.
There have been developed various deterministic/random algorithms to
factor a given polynomial (e.g. such as Berlekamp $Q-$matrix method,
Camion's algorithm \cite{camion1983}, Cantor-Zassenhaus's algorithm
\cite{CZ1981}). Shoup's algorithm \cite{Shoup1990} completely
factors a polynomial of degree $N$ over $\GF{p}$ in $O(p^{1/2}\cdot
\log p \cdot N^{2+\epsilon })$ \cite{Shparlinski1992}. The  fastest
known randomized  algorithm for factorization in $\GF{p}[X]$ is the
Kaltofen-Shoup  algorithm \cite{KaltofenShoup1998} implemented by
Kedlaya-Umans fast modular composition \cite{KedlayaUmans08}. It
belongs in the Cantor-Zassenhaus \cite{CZ1981} framework and to
factor a polynomial of degree $N$ takes $O \left(N^{3/2+o(1)}(\log
p)^{1+o(1)}+N^{1+o(1)}(\log p)^{2+o(1)}\right)$ expected time.\qed
\end{remark}

\section{Conclusions}\label{Conclusion}

Linearized polynomials over finite fields are fundamental objects
having a lot of applications (in particular related to coding theory
and finite geometry) as highlighted by Gary McGuire in his invited
talk at the recent  International Workshop on the Arithmetic of
Finite Fields  (WAIFI 2020). In this paper, we  pushed further the
study of solving  equations over finite fields by deriving  all the
solutions in $\GF{p^n}$ to the generic affine equation $\sum_{i=0}^t
\alpha_i X^{p^i}=a$ where $\alpha_i\in\GF{p}$. This allows us to
provide a generalization of the results obtained in the two previous
articles  \cite{MKCL2020} and \cite{MKCL2019}.

\end{document}